\newcommand{\be}{\begin{equation}}
\newcommand{\en}{\end{equation}}
\newcommand{\ee}{\end{equation}}
\newcommand{\bea}{\begin{eqnarray}}
\newcommand{\ena}{\end{eqnarray}}
\newcommand{\beano}{\begin{eqnarray*}}
\newcommand{\enano}{\end{eqnarray*}}
\newcommand{\bee}{\begin{enumerate}}
\newcommand{\ene}{\end{enumerate}}
\newcommand{\M}{\mathfrak M}
\newcommand{\mc}{\mathcal}
\newcommand{\E}{{\cal E}}
\newcommand{\F}{{\cal F}}
\newcommand{\1}{1 \!\! 1}
\newcommand{\Hil}{\mc H}
\newtheorem{thm}{Theorem}
\newtheorem{defn}{Definition}
\newenvironment{proof}{\noindent {\bf Proof. }}{\hfill$\square$ \vspace{3mm}\endtrivlist}
\newcommand{\kt}{\rangle}
\newcommand{\br}{\langle}
\def\ben{$$}
\def\een{$$}
\def\ba{\begin{array}{c}}
\def\ea{\end{array}}
\begin{document}

\thispagestyle{empty}

\vspace*{1cm}

\begin{center}
{\Large \bf  Non linear pseudo-bosons versus hidden Hermiticity}\\[10mm]

{\large Fabio Bagarello}\\
  Dipartimento di Metodi e Modelli Matematici,
Facolt\`a di Ingegneria,\\ Universit\`a di Palermo, I-90128  Palermo, Italy\\
e-mail: bagarell@unipa.it\\
home page:
www.unipa.it/$\sim$bagarell\\

\vspace{3mm}

and

 \vspace{3mm}

{\large Miloslav Znojil}\\
  Nuclear Physics Institute ASCR,
 250 68 \v{R}e\v{z}, Czech Republic\\
e-mail: znojil@ujf.cas.cz\\
home page:
http://gemma.ujf.cas.cz/$\sim$znojil\\

\vspace{3mm}

\end{center}

\vspace*{2cm}

\begin{abstract}
 \noindent
The increasingly popular concept of a ``hidden" Hermiticity of
operators is compared with the recently introduced notion of {\em
non-linear pseudo-bosons}. The formal equivalence between these two
notions is deduced under very general
assumptions. Examples of their applicability in quantum
mechanics are discussed.

\end{abstract}

\vspace{2cm}


\vfill


\newpage

\section{Introduction}

The strong formal limitations imposed upon an {\em observable} operator $U$ by the
requirement of its Hermiticity in a suitable Hilbert space ${\cal
H}$,  $U = U^\dagger$, have long been perceived as a challenge. In
mathematics, for example, Dieudonn\'{e} \cite{Dieudonne} introduced
the notion of the so called quasi-Hermiticity of $U$ based on the
weakened requirement $ T\,U = U^\dagger T$ with a suitable $T> 0$.
He emphasized that without additional assumptions, unexpectedly,
 the
adjoint $U^\dagger$ is not quasi-Hermitian in general, so
that the spectrum of $U$ is not necessarily real.

In the context of physics (and, most typically, in quantum
mechanics) one usually tries to avoid similar paradoxes by accepting
additional assumptions. In this context, one of the most successful
attempted generalizations of the Hermiticity has been proposed by
Scholtz et al \cite{Geyer} who restricted their attention only to
bounded observables  $U \to A \in {\cal B}({\cal H})$ and to bounded
``metric" operators $T, T^{-1} \in {\cal B}({\cal H})$.

Unfortunately, being inspired and guided just by the well known
terminology used in linear algebra of the $N$ by $N$ matrices $A$
acting in the finite-dimensional Hilbert space ${\cal H}^{(N)}
\equiv \mathbb{C}^N$, the authors of Ref.~\cite{Geyer} gave their
very narrow compact-operator subset of the Dieudonn\'{e}'s
quasi-Hermitian set {\em the same} name. In our recent summary and
completion of their proposal \cite{SIGMA}, therefore, we slightly
modified the notation (replacing the symbol for ``metric", $T \to
\Theta$) and, having recalled the Smilga's innovative terminology
\cite{Smilga}, we recommended to call the corresponding and very
nicely behaved quasi-Hermitian operators $A$ {\em cryptohermitian}.

Originally, the concept of cryptohermiticity (meaning, in essence,
just a hidden form of Hermiticity \cite{ali}) proved successful just
in the area of physics of heavy nuclei \cite{Geyer}. About fourteen
years ago a ``new life" of this concept has been initiated by Bender
with coauthors. In a way inspired by the needs of quantum field
theory \cite{DB,BM} and along the path independent of
Ref.~\cite{Geyer} they proposed an innovation of textbooks on
quantum theory. Under the nickname of  ${\cal PT}-$symmetric quantum
theory  their formalism may be found described, e.g., in
reviews \cite{ali,Dorey,Carl}.

Briefly, this formalism may be characterized by the
heuristically fruitful postulate of the so called ${\cal
PT}-$symmetry of observables (here, ${\cal P}$ means parity while
${\cal T}$ denotes the antilinear operator of time reversal
\cite{BB}). Secondly, the formalism replaces the most common
physical assumption of the reality (i.e., observability) of the
argument $x$ of the wave function $\psi(x)$ by the observability of
the so called charge ${\cal C}$. Ultimately, ${\cal PT}-$symmetric quantum
theory eliminates the
well known interpretation ambiguities of the general cryptohermitian
quantum theory \cite{Geyer,SIGMAdva} by the recommended selection of
the physical metric in the unique, factorized, ${\cal
CPT}-$symmetry-mimicking form of product $\Theta^{({\cal
CPT})}={\cal PC}$.

It is worth mentioning that the ideas coming from ${\cal
PT}-$symmetric quantum theory proved inspiring and influential even
far beyond quantum physics \cite{Makris,MHD,eldyn}. At the same
time, the modified and extended forms of the cryptohermiticity with
$\Theta \neq {\cal PC}$ had to be used for the description of the
manifestly time-dependent quantum systems \cite{timedep} and/or in
the context of the scattering dynamical regime \cite{Jones,fund}. In
our present paper we intend to pay attention to another alternative
to the formalism represented by the recent independent and parallel
introduction and studies of the concept of the so-called {\em
pseudo-bosons} (PB).

The most compact presentation of the latter PB concept is due to
Trifonov \cite{tri}. A deeper understanding of its mathematics has
been provided by the very recent series of  papers
\cite{bagpb1,bagpb2,bagpb3,bagcal,bagpbJPA,bagpb4,abg,bagrev} where
one of us (FB) revealed the necessity as well as the key importance
of the fully rigorous treatment of some of the underlying formal
questions. We intend to continue these studies in what follows.

The open and interesting formal questions arise from the bosonic
form of the canonical commutation relation $[a,a^\dagger]=\1$ upon
replacing $a^\dagger$ by another (unbounded) operator $b$ not (in
general) related to $a$: $[a,b]=\1$. More recently, FB has extended
the general settings to what has been called {\em non linear
pseudo-bosons} (NLPB, \cite{bagnlpb}) where the role of the
commutation relation is replaced by a different requirement (see
below). This extension is motivated by the attempt to include, in
this general settings, hamiltonian-like operators which have a {\em
rich} spectrum, and in particular eigenvalues, labeled by a set of
quantum numbers $n_1$, $n_2$, $\ldots$, which are not linear
functions of $n_j$'s. An interesting aspect of this construction is
the possibility of getting operators ($M$ and $M^\dagger$) which are
not self-adjoint but still have real eigenvalues. This peculiarity
is well explained by the presence of an intertwining operator (IO)
between, say, $M$ and a third self-adjoint operator, $\tilde M$.
General results on IO show that, in this case, $M$ and $\tilde M$
are isospectral, and their eigenvectors are also related by the IO
itself.

On the other hand,  the use of the above-mentioned notion of
cryptohermiticity (CH) of a given operator opens the possibility of
the parallel work with a given operator using its parallel
representations in several Hilbert spaces, mutually not necessarily
related by a unitary transformation. In this setting it is obvious
that the PB and CH concepts may be related, meaning that the
pseudo-bosonic settings provide examples of general statement
introduced for cryptohermitian operators, or vice versa.

In our present paper we intend to proceed with this analysis,
showing a sort of equivalence between NLPB and CH. In particular, we
will show that, under very reasonable assumptions, any
cryptohermitian operator gives rise to a family of NLPB which are
{\em regular} (NLRPB), see below, and vice-versa, each family of
NLRPB produces in a natural way a cryptohermitian operator.

The paper is organized as follows:  in the next section, after a
short introduction to NLRPB, we prove the  equivalence outlined
above between these excitations and a certain cryptohermitian
operator. Sections III and IV are devoted to examples, while our
conclusions are given in Section V.


\section{NLRPB versus cryptohermiticity}

We begin this Section with a short review of NLPB, giving some
details in particular on the role of bounded or unbounded operators.
We refer to \cite{bagnlpb} for some preliminary examples of this
construction. Other examples will be discussed in Sections III and
IV.

\subsection{Non linear RPB}

In \cite{bagnlpb} FB has used the main ideas which produce, out of
coherent states, the so-called non-linear coherent states, to extend
the original framework proposed for pseudo-bosons to what he has
called {\em non-linear pseudo-bosons}. The starting point is a
strictly increasing sequence $\{\epsilon_n\}$  such that
$\epsilon_0=0$: $0=\epsilon_0<\epsilon_1<\cdots<\epsilon_n<\cdots$.
Then, given two operators $a$ and $b$ on the Hilbert space $\Hil$,

\begin{defn}
We will say that the triple $(a,b,\{\epsilon_n\})$ is a family of NLRPB if the following properties hold:
\begin{itemize}

\item {\bf p1.} a non zero vector $\Phi_0$ exists in $\Hil$ such that $a\,\Phi_0=0$ and $\Phi_0\in D^\infty(b)$.

\item {\bf { p2}.} a non zero vector $\eta_0$ exists in $\Hil$ such that $b^\dagger\,\eta_0=0$ and $\eta_0\in D^\infty(a^\dagger)$.

\item {\bf { p3}.} Calling
\be
\Phi_n:=\frac{1}{\sqrt{\epsilon_n!}}\,b^n\,\Phi_0,\qquad \eta_n:=\frac{1}{\sqrt{\epsilon_n!}}\,{a^\dagger}^n\,\eta_0,
\label{55}
\en
we have, for all $n\geq0$,
\be
a\,\Phi_n=\sqrt{\epsilon_n}\,\Phi_{n-1},\qquad b^\dagger\eta_n=\sqrt{\epsilon_n}\,\eta_{n-1}.
\label{56}\en
\item {\bf { p4}.} The sets $\F_\Phi=\{\Phi_n,\,n\geq0\}$ and $\F_\eta=\{\eta_n,\,n\geq0\}$ are bases of $\Hil$.

\item {\bf { p5}.} $\F_\Phi$ and $\F_\eta$ are Riesz bases of $\Hil$.

\end{itemize}

\end{defn}

As noticed in \cite{bagnlpb},  the definitions in (\ref{55}) are well posed in the sense that, because of {\bf p1} and {\bf p2}, the vectors $\Phi_n$ and $\eta_n$ are well defined vectors of $\Hil$ for all $n\geq0$. Moreover, but for {\bf p3}, the other conditions above coincide exactly with those of RPB. In fact, we can show that {\bf p3} replaces (and extends) the commutation rule $[a,b]=\1$, which is recovered if $\epsilon_n=n$. Moreover,  if all but {{\bf p5}} are  satisfied, then we have called our particles NLPB.

Let us introduce the following (not self-adjoint) operators:
\be
M=ba,\qquad \M=M^\dagger=a^\dagger b^\dagger.
\label{57}\en
Then we can check that $\Phi_n\in D(M)\cap D(b)$, $\eta_n\in D(\M)\cap D(a^\dagger)$, and, more than this, that
\be
b\,\Phi_n=\sqrt{\epsilon_{n+1} }\,\Phi_{n+1},\qquad a^\dagger\eta_n=\sqrt{\epsilon_{n+1} }\,\eta_{n+1},
\label{58}\en
which is a consequence of definitions (\ref{55}), as well as
\be
M\Phi_n=\epsilon_n\Phi_n,\qquad \M\eta_n=\epsilon_n\eta_n,
\label{59}\en
These eigenvalue equations imply that the vectors in $\F_\Phi$ and $\F_\eta$ are mutually orthogonal. More explicitly,
\be
\left<\Phi_n,\eta_m\right>=\delta_{n,m}.
\label{510}\en
where  we have fixed the normalization of $\Phi_0$ and $\eta_0$ in such a way that $\left<\Phi_0,\eta_0\right>=1$.

In \cite{bagnlpb} we have also proved that conditions  \{{\bf p1}, {\bf p2}, {\bf p3}, {\bf p4}\} are equivalent to  \{{\bf p1}, {\bf p2}, {\bf p3}$'$, {\bf p4}\}, where

\vspace{2mm}

{\bf {\bf p3}$'$.} The vectors $\Phi_n$ and $\eta_n$ defined in (\ref{55}) satisfy (\ref{510}).

\vspace{2mm}

In the following, therefore, we can use {\bf p3} or {\bf p3}$'$
depending of which is more convenient for us.

\vspace{2mm}

Carrying on our analysis on the consequences of the definition on
NLRPB, and in particular of ${\bf p4}$, we rewrite this assumption
in bra-ket formalism as
 \be
\sum_n|\Phi_n\kt\,\br \eta_n|=\sum_n|\eta_n\kt\,\br \Phi_n|=\1,
\label{512}
 \en
while {\bf p5} implies that the operators
$S_\Phi:=\sum_n|\Phi_n\kt\,\br \Phi_n|$ and
$S_\eta:=\sum_n|\eta_n\kt\,\br \eta_n|$ are positive, bounded,
invertible and that $S_\Phi=S_\eta^{-1}$.  The new fact is that the
operators $a$ and $b$ do not, in general, satisfy any {\em simple}
commutation rule. Indeed, we can check that, for all $n\geq0$,
 \be
[a,b]\Phi_n=\left(\epsilon_{n+1}-\epsilon_n\right)\Phi_n,
\label{513}\en
which is different from $[a,b]=\1$ in general. In \cite{bagnlpb} it has also been proved that, not surprisingly, if  $\sup_n\epsilon_n=\infty$, then the operators $a$ and $b$ are unbounded. We end this overview  mentioning also that $M$ and $\M$ are connected by an intertwining operator, related to $S_\Phi$. We will use this property in what follows.

\subsection{Connection with cryptohermiticity}

The introduction of the above-mentioned notion of cryptohermiticity
of a given operator $H$ \cite{SIGMA} enables us to distinguish
between the ``first" Hilbert space  $\Hil^{(F)}$ (in which the
operator in question is {\em not} self-adjoint, $H\neq H^\dagger$)
and the ``second" Hilbert space  $\Hil^{(S)}$ in which {\em the
same} operator  {\em is} self-adjoint (one may write, e.g.
\cite{SIGMA}, $H =H^\ddagger$). The idea behind such an apparent
paradox is that one can say that $H$ can only be declared
self-adjoint {\em with respect to a definite scalar product}. In
this sense one usually starts form the ``friendly" definition of the
so called Dirac's (i.e., roughly speaking, ``transposition plus
complex conjugation") definition of $H^\dagger$ in $\Hil^{(F)}$ and
complements its by the mere modification of the inner product in
$\Hil^{(F)}$ yielding the explicit definition of
$H^\ddagger=\Theta^{-1}H^\dagger\Theta$ written in terms of the
positive metric operator $\Theta=\Theta^{(S)} \neq I$ which should
be, together with its inverse \cite{Geyer}, bounded and self-adjoint
in the ``friendly" space $\Hil^{(F)}$.

In the models where $\Theta$ as well as $\Theta^{-1}$ are bounded,
one can comparatively easily deal with mathematical questions.
Otherwise, there emerge several subtle points related to the domain
of the operators in question.  Due to the relevance of the {metric
operator} $\Theta$, let us make now the standard notation
conventions less ambiguous.

\begin{defn}
Let us consider two (not necessarily bounded) operators $H$ and
$\Theta$ acting on the Hilbert space $\Hil$, with $\Theta$  positive
and invertible. Let us call $H^\dagger$ the adjoint of $H$ in $\Hil$
with respect to its scalar product and
$H^\ddagger=\Theta^{-1}H^\dagger\Theta$, when this exists. We will
say that $H$ is cryptohermitian with respect to $\Theta$
(CHwrt$\Theta$) if $H=H^\ddagger$.
\end{defn}

Using standard facts on functional calculus it is obvious that the operators $\Theta^{\pm 1/2}$ are well defined. Hence we can introduce an operator $h:=\Theta^{1/2}\,H\,\Theta^{-1/2}$, at least if the domains of the operators allow us to do so. More explicitly, $h$ is well defined if, taken $f\in D(\Theta^{-1/2})$,
$\Theta^{-1/2}f\in D(H)$ and if $H\,\Theta^{-1/2}f\in D(\Theta^{1/2})$. Of course,
these requirements are surely satisfied if $H$ and $\Theta^{\pm 1/2}$ are bounded. Otherwise some care is required. It is easy to check that $h=h^\dagger$. Hence the following definition appears natural:

\begin{defn}
Assume that  $H$ is CHwrt$\Theta$, for $H$ and $\Theta$ as above. $H$ is {\em well behaved} wrt $\Theta$ if $h$ has only discrete eigenvalues $\epsilon_n$, $n\in {\Bbb N}_0:={\Bbb N}\cup\{0\}$, with eigenvectors $e_n$: $he_n=\epsilon_n e_n$, $n\in {\Bbb N}_0$.
\end{defn}

\vspace{2mm}

It is convenient, but not really necessary, to restrict ourself to
the case in which the multiplicity of each eigenvalue $\epsilon_n$,
$m(\epsilon_n)$, is one. To fix the ideas we also assume that
$0=\epsilon_0<\epsilon_1<\epsilon_2<\ldots$. The above definition
implies that the set $\E=\{e_n,\, n\in {\Bbb N}_0\}$ is an
orthonormal basis of $\Hil$, so that it produces a resolution of the
identity which we write in the bra-ket language as
$\sum_{n=0}^\infty |e_n\kt\,\br e_n|=\1$. The following theorem can
be proved:

\begin{thm}
Let  $H$ be well behaved wrt $\Theta$, where $\Theta, \Theta^{-1}\in B(\Hil)$. Then it is possible to introduce two operators $a$ and $b$ on $\Hil$, and a sequence of real numbers $\{\epsilon_n,\,n\in {\Bbb N}_0 \}$, such that the triple $(a,b,\{\epsilon_n\})$ is a family of NLRPB.

Vice versa, if  $(a,b,\{\epsilon_n\})$ is a family of NLRPB, two operators can be introduced,
$H$ and $\Theta$, such that $\Theta, \Theta^{-1}\in B(\Hil)$, and $H$ is well behaved wrt $\Theta$.
\end{thm}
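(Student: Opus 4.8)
The plan is to realize both implications through a single similarity transformation implemented by the bounded, positive, boundedly-invertible operator $\Theta^{1/2}$ (well defined by functional calculus since $\Theta,\Theta^{-1}\in B(\Hil)$), which maps the self-adjoint, diagonalizable picture governed by $h$ into the non-self-adjoint picture governed by $M=ba$. The bridge between the two sides is the observation that the bare lowering and raising operators attached to the sequence $\{\epsilon_n\}$ on an orthonormal basis become, after conjugation by $\Theta^{\pm1/2}$, a pair of \emph{non-mutually-adjoint} operators $a,b$ producing biorthogonal Riesz bases.

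For the first implication I would start from $H$ well behaved wrt $\Theta$, with $h=\Theta^{1/2}H\Theta^{-1/2}=h^\dagger$, $he_n=\epsilon_n e_n$, and $\{e_n\}$ an orthonormal basis. On this basis introduce the reference operators $a_0 e_n=\sqrt{\epsilon_n}\,e_{n-1}$ (with $a_0e_0=0$) and its adjoint $a_0^\dagger e_n=\sqrt{\epsilon_{n+1}}\,e_{n+1}$, so that $a_0^\dagger a_0=h$. Then define
$$a:=\Theta^{-1/2}a_0\Theta^{1/2},\qquad b:=\Theta^{-1/2}a_0^\dagger\Theta^{1/2},$$
together with $\Phi_n:=\Theta^{-1/2}e_n$ and $\eta_n:=\Theta^{1/2}e_n$. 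A direct computation then gives $a\Phi_0=0$, $b^\dagger\eta_0=0$, $a\Phi_n=\sqrt{\epsilon_n}\,\Phi_{n-1}$, $b^\dagger\eta_n=\sqrt{\epsilon_n}\,\eta_{n-1}$ and $\langle\Phi_n,\eta_m\rangle=\langle e_n,e_m\rangle=\delta_{n,m}$, which is exactly {\bf p3}$'$; biorthogonality together with $b^n\Phi_0=\sqrt{\epsilon_n!}\,\Phi_n$ yield {\bf p1}, {\bf p2} and {\bf p3}. Since $\Theta^{\pm1/2}$ are bounded with bounded inverse, the images $\F_\Phi=\{\Theta^{-1/2}e_n\}$ and $\F_\eta=\{\Theta^{1/2}e_n\}$ of the orthonormal basis $\{e_n\}$ are automatically Riesz bases, giving {\bf p4} and {\bf p5}. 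As a consistency check one finds $M=ba=\Theta^{-1/2}a_0^\dagger a_0\Theta^{1/2}=\Theta^{-1/2}h\Theta^{1/2}=H$.

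For the converse I would set $H:=M=ba$ and $\Theta:=S_\eta=\sum_n|\eta_n\rangle\langle\eta_n|=S_\Phi^{-1}$, which by {\bf p5} is positive with $\Theta,\Theta^{-1}\in B(\Hil)$. From biorthogonality one computes $\Theta\Phi_m=\eta_m$, whence, using $M\Phi_m=\epsilon_m\Phi_m$ and $\M\eta_m=\epsilon_m\eta_m$, the intertwining relation $\Theta M=\M\Theta=M^\dagger\Theta$ holds on the span of $\{\Phi_m\}$; this is precisely $H=\Theta^{-1}H^\dagger\Theta=H^\ddagger$, i.e. CHwrt$\Theta$. Finally put $e_n:=\Theta^{1/2}\Phi_n$; then $\langle e_n,e_m\rangle=\langle\Phi_n,\Theta\Phi_m\rangle=\langle\Phi_n,\eta_m\rangle=\delta_{n,m}$ shows $\{e_n\}$ is an orthonormal basis, and $he_n=\Theta^{1/2}H\Phi_n=\epsilon_n\Theta^{1/2}\Phi_n=\epsilon_n e_n$ shows $h$ has the discrete spectrum $\{\epsilon_n\}$, so $H$ is well behaved wrt $\Theta$.

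The main obstacle I expect is not the algebra but the domain bookkeeping when $a,b$ (equivalently $H$ and $h$) are unbounded, which is exactly the regime $\sup_n\epsilon_n=\infty$. The conjugation identities such as $b^\dagger=\Theta^{1/2}a_0\Theta^{-1/2}$ and $\Theta M=\M\Theta$ must be read on the dense domains spanned by $\{\Phi_n\}$, $\{\eta_n\}$, $\{e_n\}$ rather than as global operator equalities, and one has to verify that the relevant vectors remain in the domains $D^\infty(b)$, $D^\infty(a^\dagger)$, $D(M)$ and $D(\M)$. The boundedness of $\Theta^{\pm1/2}$ is what keeps this under control: it guarantees that $\Theta^{\pm1/2}$ preserve the relevant spans and are compatible with the closure operations, so that the formal adjoint computations are legitimate and the Riesz-basis property transfers verbatim from $\{e_n\}$ to $\F_\Phi$ and $\F_\eta$.
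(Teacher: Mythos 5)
Your proposal is correct and follows essentially the same route as the paper: both directions hinge on the similarity transformation by $\Theta^{\pm 1/2}$, with $\Phi_n=\Theta^{-1/2}e_n$, $\eta_n=\Theta^{1/2}e_n$ for the forward implication and $H=ba$, $\Theta=S_\eta$ for the converse. The only cosmetic difference is that you package $a$ and $b$ as conjugates $\Theta^{-1/2}a_0\Theta^{1/2}$, $\Theta^{-1/2}a_0^\dagger\Theta^{1/2}$ of reference ladder operators on $\{e_n\}$, whereas the paper defines them directly by their action on $\F_\Phi$ — the resulting operators and all subsequent verifications (p1--p5, biorthogonality, the intertwining relation $S_\eta H = H^\dagger S_\eta$) coincide.
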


\begin{proof}

To prove the first part of the theorem we introduce the following families of vectors of $\Hil$:
 $$
 \F_\Phi=\left\{\Phi_n:=\Theta^{-1/2}e_n, \,n\in {\Bbb
 N}_0\right\}\,, \quad \F_\eta=\left\{\eta_n:=\Theta^{1/2}e_n, \,n\in
 {\Bbb N}_0\right\}.
 $$
Because of our assumptions, $\F_\Phi$ and $\F_\eta$ are Riesz bases
of $\Hil$ which are also biorthonormal:
$\left<\Phi_n,\eta_k\right>=\delta_{n,k}$. Hence, conditions {\bf
p4} and {\bf p5} are satisfied. On $\F_\Phi$ we define two operators
$a$ and $b$ as follows \be a\Phi_n=\sqrt{\epsilon_n}\,\Phi_{n-1},
\quad b\Phi_n=\sqrt{\epsilon_{n+1}}\,\Phi_{n+1}, \label{71}\en for
all $n\geq0$. By definition, {\bf p1} is satisfied: $a\Phi_0=0$
since $\epsilon_0=0$ and $\Phi_0\in D^\infty(b)$ since, iterating
the second equation in (\ref{71}) we deduce that
$b^n\Phi_0=\sqrt{\epsilon_n!}\Phi_n$, which shows that $b^n\Phi_0$
is well defined for all $n$, being $\Phi_n=\Theta^{-1/2}e_n\in\Hil$.
To check condition {\bf p2} we first have to compute the action of
$a^\dagger$ and $b^\dagger$ on a suitable dense set in $\Hil$. It is
easy to check that (\ref{71}), together with the fact that
$\left<\Phi_n,\eta_k\right>=\delta_{n,k}$, imply that \be
a^\dagger\eta_n=\sqrt{\epsilon_{n+1}}\,\eta_{n+1}, \quad
b^\dagger\eta_n=\sqrt{\epsilon_{n}}\,\eta_{n-1}, \label{72}\en for
all $n\geq0$. Equations (\ref{71}) and (\ref{72}) imply that, for
instance, $b$ is a raising operator for $\F_\Phi$ but $b^\dagger$ is
a lowering operator for $\F_\eta$. From (\ref{72}) we see that
$b^\dagger\eta_0=0$. Iterating the first equation, we also find that
$(a^\dagger)^n\eta_0=\sqrt{\epsilon_n!}\eta_n=\sqrt{\epsilon_n!}\Theta^{1/2}e_n$,
which is a well defined vector in $\Hil$ for all $n\geq0$. Hence
{\bf p2} is satisfied. We end this part of the proof noticing that
{\bf p3'} surely holds and, as a consequence, also {\bf p3} is
verified. Hence $(a,b,\{\epsilon_n\})$ is a family of NLRPB, as
expected.

\vspace{2mm}

Let us now prove the converse: we assume that $(a,b,\{\epsilon_n\})$
is a family of NLRPB and we show how to construct two operators, $H$
and $\Theta$, such that $H$ is cryptohermitian and well behaved wrt
$\Theta$.

This proof  is based on the fact that, since $\F_\Phi$ and $\F_\eta$
are Riesz bases, the operators
 \be
 S_\Phi:=\sum_{n=0}^\infty|\Phi_n\kt\,\br \Phi_n|\,, \ \ \
 S_\eta:=\sum_{n=0}^\infty|\eta_n\kt\,\br \eta_n|
 \label{metroids}
 \ee
are {\em both} positive and bounded. Assuming that
$\left<\Phi_0,\eta_0\right>=1$, they satisfy $S_\Phi=S_\eta^{-1}$.
The operator $H:=ba$ is well (and densely) defined since, because of
{\bf p3}, $\Phi_n\in D(a)$ and $a\Phi_n\in D(b)$. More than this, we
deduce that $H\Phi_n=\epsilon_n\Phi_n$. Analogously we find that
$\eta_n\in D(H^\dagger)$, and that
$H^\dagger\eta_n=\epsilon_n\eta_n$, for all $n\geq0$. Since
$S_\Phi\eta_n=\Phi_n$ and $S_\eta\Phi_n=\eta_n$ we can rewrite this
last eigenvalue equation as $S_\eta^{-1}H^\dagger
S_\eta\,\Phi_n=\epsilon_n\Phi_n$, which, together with the first
eigenvalue equation and using the completeness of $\F_\Phi$, implies
that $H=S_\eta^{-1}H^\dagger S_\eta$. Hence $H$ is CHwrt$S_\eta$.
Due to the properties of intertwining operators $H$, $H^\dagger$ and
$h:=S_\eta^{1/2}\,H\,S_\eta^{-1/2}$ all have the same eigenvalues
and related eigenvectors. This concludes the proof.

\end{proof}

We want to briefly consider few consequences and remarks of this theorem.
\begin{enumerate}

\item The formal expressions of the operators introduced so far can be easily deduced. For instance we have
 \be
a=\sum_{n=0}^\infty \sqrt{\epsilon_n}|\Phi_{n-1}\kt\,\br
\eta_n|,\quad b=\sum_{n=0}^\infty
\sqrt{\epsilon_{n+1}}|\Phi_{n+1}\kt\,\br \eta_n|.
 \label{expr}
 \ee
From these we can also deduce the formal expansions for $a^\dagger$
and $b^\dagger$. Moreover $h=\sum_{n=0}^\infty
\epsilon_n|e_{n}\kt\,\br e_n|$, $H=\sum_{n=0}^\infty
\epsilon_n|\Phi_{n}\kt\,\br \eta_n|$ and
$H^\dagger=\sum_{n=0}^\infty \epsilon_n|\eta_{n}\kt\,\br \Phi_n|$.
These formulas show, among other features, that $h$, $H$ and
$H^\dagger$ are isospectrals.
\item
A straightforward computation shows that $S_\Phi=\Theta^{-1}$ and
$S_\eta=\Theta$, as we have also deduced in the proof of the second
part of the theorem. This fact, together with our previous results,
shows that the frame operators $S_\eta$ and $S_\Phi$, as well as
their square roots, behave as intertwining operators. This is
exactly the same kind of results we can deduce for {\em ordinary}
pseudo-bosons, where biorthogonal Riesz bases and intertwining
operators are recovered.
\item
Even if $h$ is not required to be factorizable, because of our
construction it turns out that it can be written as $h=b_\Theta
a_\Theta$, where $a_\Theta=\Theta^{1/2}a\,\Theta^{-1/2}$ and
$b_\Theta=\Theta^{1/2}b\,\Theta^{-1/2}$. Incidentally, in general
$[a_\Theta,b_\Theta]=\Theta^{1/2}[a,b]\,\Theta^{-1/2}\neq [a,b]$,
but if $\left[[a,b],\Theta^{1/2}\right]=0$, which is the case for
pseudo-bosons. Therefore, at least at a formal level, our
construction shows that the hamiltonian $h$ can be written in a
factorized form.

\item
The reasons for the attention paid to the role of Riesz bases may be
traced back to the Mostafazadeh' results. In chapter 2 of review
\cite{ali} (cf. also references therein, or \cite{SIGMAdva} and
\cite{bagnlpb}) he emphasized that in the methodical analyses of the
formalism of pseudo-hermitian quantum mechanics it makes sense to
pay particular attention to the finite dimensional Hilbert spaces
for simplicity. This inspired not only the present proof but also
the popular constructions of metric operators using Riesz bases
formed by eigenstates of non-Hermitian Hamiltonians. The same idea
also helped to clarify the essence of the problem of the ambiguity
of the metric as formulated by Scholtz et al \cite{Geyer}.

\item
Although we deal here with an infinite-dimensional Hilbert
space in general, it makes good sense to contemplate a reduction of
our observations
to a finite
dimensional Hilbert space.
In such a
simplified scenario one reveals several interesting connections
with the
recent
$n-$level
coherent-state
constructions by Najarbashi et al. \cite{Najar}.

\end{enumerate}

\section{Illustrative matrix models with ascending spectra}

\subsection{A two-by-two example  with two free parameters\label{2by2}}

We consider first a two-dimensional illustrative schematic matrix
example, originally introduced in \cite{bagnlpb}. Let $\Hil={\Bbb
C}^2$ be our Hilbert space and let us consider the following
matrices on $\Hil$,
 \be
A=\left(
   \begin{array}{cc}
     -1 & \beta \\
     -\frac{1}{\beta} & 1 \\
   \end{array}
 \right),\qquad B=\left(
   \begin{array}{cc}
     -1 & \delta \\
     -\frac{1}{\delta} & 1 \\
   \end{array}
 \right),\qquad
 \label{kreace}
 \ee
where $\beta\neq\delta$ are real quantities.  The vectors $\Phi_0=y\left(
           \begin{array}{c}
             \beta \\
             1 \\
           \end{array}
         \right)
$
and $\eta_0=w\left(
           \begin{array}{c}
             1 \\
             -\delta \\
           \end{array}
         \right)
$  satisfy $A\Phi_0=B^\dagger\eta_0=0$ and contain  normalization
constants  $y$ and $w$ which we take real and constrained,
$yw(\beta-\delta)=1$. Putting $\epsilon_0=0$ and
$\epsilon_1=-\frac{1}{\beta\delta}(\beta-\delta)^2$ we  define
 $$
\Phi_1=\frac{1}{\sqrt{\epsilon_1}}\,B\,\Phi_0
       =\frac{y}{\sqrt{\epsilon_1}}\left(
           \begin{array}{c}
             \delta-\beta \\
             -\frac{\beta}{\delta}+1 \\
           \end{array}
         \right), \qquad \eta_1=\frac{1}{\sqrt{\epsilon_1}}\,A^\dagger\,
                   \eta_0=\frac{w}{\sqrt{\epsilon_1}}\left(
           \begin{array}{c}
             \frac{\delta}{\beta}-1 \\
              \beta-\delta\\
           \end{array}
         \right).
 $$
Hence both $\F_\Phi=\{\Phi_0,\Phi_1\}$ and
$\F_\eta=\{\eta_0,\eta_1\}$ are (biorthogonal) Riesz bases of
$\Hil$, satisfying $A\Phi_0=B^\dagger\eta_0=0$,
$A\Phi_1=\sqrt{\epsilon_1}\Phi_0$ and
$B^\dagger\eta_1=\sqrt{\epsilon_1}\,\eta_0$. With this choice,
calling
 \be
M=BA=\left(
       \begin{array}{cc}
         1-\frac{\delta}{\beta} & \delta-\beta \\
         \frac{1}{\delta}-\frac{1}{\beta} & -\frac{\beta}{\delta}+1 \\
       \end{array}
     \right)\qquad \M=A^\dagger B^\dagger=\left(
       \begin{array}{cc}
         1-\frac{\delta}{\beta} & \frac{1}{\delta}-\frac{1}{\beta} \\
         \delta-\beta & -\frac{\beta}{\delta}+1 \\
       \end{array}
     \right),
     \label{calling}
 \ee
we can check that $M\Phi_k=\epsilon_k\Phi_k$ and
$\M\eta_k=\epsilon_k\eta_k$, $k=0,1$. It is also easy to compute
$[A,B]$, which is different from zero if $\delta\neq\beta$ and it is
never equal to the identity operator. Also, we have
$\sum_{k=0}^1|\Phi_k\kt\,\br \eta_k|=\1$ and
 \be
S_\Phi=y^2\left(
       \begin{array}{cc}
         \beta(\beta-\delta) & 0 \\
         0 & 1-\frac{\beta}{\delta} \\
       \end{array}
     \right),\quad
S_\eta=w^2\left(
       \begin{array}{cc}
         1-\frac{\delta}{\beta} & 0 \\
         0 & \delta(\delta-\beta) \\
       \end{array}
     \right).
     \label{bymo}
 \ee
A direct computation finally shows that $S_\Phi=S_\eta^{-1}$ and
that $MS_\Phi =S_\Phi \M$. This can be written as $M=S_\eta^{-1}
\M\,S_\eta$, which shows that $M$ is CHwrt$S_\eta$. Moreover, as it
is clear, $S_\eta$ and $S_\eta^{-1}=S_\Phi$ are bounded operators.
Hence, the first part of the second statement of Theorem 1 is
recovered. To check that $M$ is also well behaved wrt $S_\eta$ it is
sufficient to compute $h=S_\eta^{1/2}MS_\eta^{-1/2}$, and to compute
the two eigenvalues which must have multiplicity 1. This is a simple
exercise in linear algebra and will not be done here.

\subsection{An $N$ by $N$ matrix example without free parameters}

Whenever one tries to apply the principles of cryptohermitian
quantum mechanics in phenomenology, say, of solid-state physics
\cite{Joglekar}, one must contemplates matrices (\ref{57}) of
perceivably larger dimensions $N \gg 2$. In such a realistic setting
one is usually forced to employ a suitable purely numerical method.
Typically, it is practically impossible to employ the
finite-dimensional version
 \be
a=\sum_{n=1}^{N-1} \sqrt{\epsilon_n}|\Phi_{n-1}\kt\,\br
\eta_n|,\quad b=\sum_{n=0}^{N-2}
\sqrt{\epsilon_{n+1}}|\Phi_{n+1}\kt\,\br \eta_n|
 \label{exprN}
 \ee
of the spectral-like expansion formula (\ref{expr}) because its
components themselves are only available, generically, in a purely
numerical representation. The situation further worsens if one tries
to render the ``phenomenological input" matrices (\ref{57}) varying
with a suitable coupling-simulating parameter.

Fortunately, several  arbitrary$-N$ benchmark examples have been
recently found in the context of a cryptohermitian reinterpretation
of certain properties of the classical orthogonal polynomials
\cite{gegenb,lagenre,chebypol}. For our present illustrative
purposes the latter reference proves particularly suitable since it
renders {\em both} the underlying $N$ by $N$ Schr\"{o}dinger
equations, $H\Phi_n=\epsilon_n\Phi_n$ and $H^\dagger\eta_n=\epsilon_n\eta_n$,
exactly solvable.

At the general matrix dimension $N \geq 2$ the main message
delivered by ref.~\cite{chebypol} may be read as the discovery of
the feasibility of the construction of the $N-$parametric metrics
$\Theta$ of which the above-defined matrices $S_\eta$ represent just
the NLPB-related special cases of present interest. In the opposite
direction, the above-mentioned exact solvability of the pair of
Schr\"{o}dinger equations will make it easy, for us, to feel guided
by our previous benchmark example of paragraph \ref{2by2}.


Our present extension of the above illustrative $N=2$ considerations
to all the finite integers $N = 2, 3, \ldots$ will be based on the
results of Ref.~\cite{chebypol} where the Hamiltonian-simulating
matrices were chosen in the form which we shall denote by the tilded
symbol
 \be
 \tilde{H}=\left[
\begin {array}{ccccc}
0&2&0&\ldots&0\\
{}1&0&1&\ddots&\vdots\\
{}0&1&\ddots&\ddots&0\\
{}\vdots&\ddots&\ddots&0&1\\
{}0&\ldots&0&1&0
\end {array} \right]\,.
 \label{Tsqwodel}
 \ee
The standard Hermiticity condition is obviously violated here. In
the notation of Ref.~\cite{chebypol} and via the underlying
conjugate pair of the linear algebraic Schr\"{o}dinger eigenvalue
problems
    \be
     \tilde{H}\Phi_n=E_n\Phi_n,\quad \ \
     \left [\tilde{H}\right ]^\dagger\eta_n
     =E_n\eta_n
     \label{hasdiscus}
     \ee
we get, in particular, the site-indexed components $\{
\alpha|\Phi\kt$, $\alpha=1,2,\ldots,N$ of the $n-$th eigenstate
$\Phi_n$ of our Hamiltonian (where, conventionally, $n = 0, 1,
\ldots, N-1$) in the following closed and arbitrarily normalized
form,
 \ben
 \{ 1|\Phi\kt=T(0,x)= 1\,,\ \ \  \{ 2|\,\Phi\kt=T(1,x)= x
 \,,\ \ \
 \een
 \be
  \{ 3|\Phi\kt=T(2,x)=
 2\,{x}^{2}-1
 \,,\ \ \
 \,,\ldots\,, \{ N|\Phi\kt=T(N-1,x),
 \label{setTa}
 \ee
where the letter $T$ denotes the classical orthogonal Chebyshev
polynomials of the first kind. One can easily deduce \cite{chebypol}
that
$$
\Phi_n=\left(
         \begin{array}{c}
           T(0,x) \\
           T(1,x) \\
           \vdots \\
           \vdots \\
           T(N-1,x) \\
         \end{array}
       \right).
$$
The argument $x=x_n^{(N)}$ of these polynomials is fixed by the
secular equation $T(N,x_n)=0$ which is exactly solvable,
 \be
  E_n=2\,x_n^{(N)}
  =-2\,\cos
 \frac{(n+1/2)\pi}{N}\,, \ \ \ \ \ n = 0, 1, \ldots, N-1\,.
 \label{eigvalT}
 \ee
This formula defines the necessary $N-$plet of energies at every
dimension $N$ (note that in comparison with Ref.~\cite{chebypol} a
more natural and convenient choice of the minus sign is being used
here and in what follows).

For our present purposes we still need to replace the tilded,
auxiliary Hamiltonians $\tilde{H}$ of Eq.~(\ref{Tsqwodel}) (which do
not exhibit the above-required positive-semidefiniteness of the
spectrum) by our following untilded, constant-shifted ultimate
matrices $M$ which are real and manifestly non-Hermitian in the
conventional sense,
 \be
 M=\left[
\begin {array}{ccccc}
{Z} &2&0&\ldots&0\\
{}1&{Z} &1&\ddots&\vdots\\
{}0&1&\ddots&\ddots&0\\
{}\vdots&\ddots&\ddots&{Z} &1\\
{}0&\ldots&0&1&{Z}
\end {array} \right]
 \label{Tsqwodel2}
 \ee
and where we choose ${Z} \equiv E_{N-1}>0$. In terms of these
matrices with the property $M \neq M^\dagger :=M^T$ (the superscript
$^T$ marks transposition) we may write down our final, untilded pair
of toy-model Schr\"{o}dinger equations
    \be
     M\Phi_n=\epsilon_n\Phi_n,\quad \ \
     M^T\eta_n
     =\epsilon_n\eta_n
     \label{hasdiscussed}
     \ee
with the sharply ascending spectrum, $\epsilon_n=E_n+{Z} \geq 0$.

\section{
Closed-form constructions at $N \leq 5$}

One of the most important merits of our toy-model matrix
(\ref{Tsqwodel2}) has been found to lie in the closed form of the
solution of the second, conjugate Schr\"{o}dinger equation in
(\ref{hasdiscussed}). For the corresponding lattice-site
unnormalized components $\{\alpha|\eta\kt$ one obtains the
following, almost identical prescription
 \be
 \{ \alpha|\,\eta\kt=T(n,x)
 \,,\ \ \ \alpha= 2, 3,\ldots, N\,
 \label{setTb}
 \ee
which does not differ from its predecessor (\ref{setTa}) but which
must be complemented by the single different missing item $ \{
1|\eta\kt=T(0,x)/2= 1/2$. The latter feature makes the resulting
biorthogonal system of vectors deceptively similar to an orthogonal
system. Thus, for many purposes it proves useful to separate the
whole set of sites into the ``exceptional" item $\alpha=1$
accompanied by the $(N-1)-$dimensional rest. The more detailed
description of several technical consequences of this split may be
found in Ref.~\cite{chebypol}. A particularly important question of
the appropriate choice of normalization has been analyzed in
Ref.~\cite{SIGMAdva}.

\subsection{The choice of $N=2$}

For our present purposes, it is particularly useful to recall
formula (\ref{exprN}) in its utterly elementary $N=2$ version
 \be
 a= \sqrt{\epsilon_1}|\Phi_{0}\kt\,\br \eta_1|,\quad
  b=
 \sqrt{\epsilon_{1}}|\Phi_{1}\kt\,\br \eta_0|
 \label{expr2}
 \ee
where we have to insert the eigenvalues $\epsilon_0=0$ and
$\epsilon_1=2\sqrt{2}$ of the matrix
 $$
 {M}=
 \left[ \begin {array}{cc} \sqrt {2}&2\\\noalign{\medskip}1&\sqrt
 {2}\end {array} \right]
 $$
and of its transpose. The four properly re-normalized real
eigenvectors of these matrices may be easily found and written, for
typographical reasons, in  transposed form
 \ben
  \Phi_{0}^T=c_{R0}\, [1/\sqrt{2},-1/2]\,,\ \ \
  \Phi_{1}^T=c_{R1}\, [1, 1/\sqrt{2}]\,,\ \ \
  \eta_{0}^T=c_{L0}\, [ 1/\sqrt{2},-1]\,,\ \ \
  \eta_{1}^T=c_{L1}\, [1/2, 1/\sqrt{2}]\,.
 \een
By the Schr\"{o}dinger equations themselves the quadruplet of the
normalization constants $c$ is left arbitrary. In the present NLRPB
context it is assumed that we choose their values is such a way that
our vectors form a biorthonormalized system. At $N=2$ it is then the
matter of elementary algebra to specify, say, $c_{R0} =c_{R1}
=c_{L0} =c_{L1}=1$.

At this point it is necessary to realize \cite{SIGMAdva} that at any
$j = 0,1,\ldots,N-1$ the simultaneous multiplication of $c_{Lj}$ and
division of $c_{Rj}$ by the same constant $\nu_j$ will keep the
biorthonormality and bicompleteness relations unchanged. In this
sense, formulae (\ref{metroids}) define just a very specific,
$\nu_j=1$ metric which is unique. At $N=2$, in particular, our
choice of the normalization leads to the metric
 \be
  S_\eta^{(2)}=|\eta_0\kt\,\br \eta_0|+ |\eta_1\kt\,\br \eta_1|=
  \frac{1}{4}\,
  \left[ \begin {array}{cc} 3&-\sqrt {2}
  \\\noalign{\medskip}-\sqrt {2}&6
  \end {array} \right]\,.
   \label{rometroid}
 \ee
Its eigenvalues $s_\pm=(9\pm \sqrt{17})/8$ are both positive. The
same observation can be also made at the higher dimensions $N$. The
most important conclusion may be already drawn from our first
nontrivial illustration (\ref{rometroid}) which shows that in
contrast to the previous $N=2$ model (\ref{bymo}) of Sec. III.1, the
matrix elements of the {\em generic} metric $S_\eta^{(N)}$ (as well
as of its inverse $S_\Phi^{(2)}$) will be {\em all} non-zero.

Another  choice of the above-mentioned scaling parameters $\nu_j\neq
1$ could be employed converting the metric $S_\eta^{(N)}$ into a
diagonal or sparse matrix. The price to be paid is that the
necessary proof of the positivity of this matrix becomes nontrivial
and dimension-dependent in general. More details (as well as a few
elementary sample constructions of the families of metrics assigned
to our present illustrative zero-parametric Hamiltonians) may be
found in Ref.~\cite{chebypol}.

\subsection{The next special case with $N=3$}

The required insertion in the explicit $N=3$ recipe
 \be
 a=
  \sqrt{\epsilon_1}|\Phi_{0}\kt\,\br \eta_1|+
  \sqrt{\epsilon_2}|\Phi_{1}\kt\,\br \eta_2|
  ,\quad
 b= \sqrt{\epsilon_{1}}|\Phi_{1}\kt\,\br \eta_0|
 +\sqrt{\epsilon_{2}}|\Phi_{2}\kt\,\br \eta_1|
 \label{expr3}
 \ee
may be based on the not too tedious evaluation of the eigenvalues
$\epsilon_0=0$, $\epsilon_1=\sqrt{3}$ and $\epsilon_2=2\sqrt{3}$ and
of the respective eigenvectors
 \ben
  \Phi_{0}^T=c_{R0}\, [1,-\sqrt {3}/2,1/2] \,,\ \ \
  \Phi_{1}^T=c_{R1}\, [-1,0,1]\,,\ \ \
  \Phi_{2}^T=c_{R2}\, [2,\sqrt {3},1]\,,\ \ \
 \een
 \ben
  \eta_{0}^T=c_{L0}\,[1,-\sqrt {3},1]\,,\ \ \
  \eta_{1}^T=c_{L1}\, [1,0,-2]\,,\ \ \
  \eta_{2}^T=c_{L2}\, [1,\sqrt {3},1]\,
 \een
of the matrix
 $$
 {M}=
 \left[ \begin {array}{ccc} \sqrt {3}&2&0\\\noalign{\medskip}1&\sqrt {
3}&1\\\noalign{\medskip}0&1&\sqrt {3}\end {array} \right]
 $$
and of its transpose. One can again proceed in full analogy with the
above $N=2$ example. It is perhaps interesting to add that the
NLRPB-related special metric matrices
 \be
  S_\eta^{(3)}=|\eta_0\kt\br \eta_0|+
   |\eta_1\kt\br \eta_1|+ |\eta_3\kt\br \eta_3|
   \label{trometroid}
 \ee
need not necessarily remain non-sparse. For example, the judicious
choice
 \ben
 c_{R0}=1/3\,,\ \
 c_{R1}=-1/3\,,\ \
 c_{R2}=1/6\,,\ \
 c_{L0}=1\,,\ \
 c_{L1}=1\,,\ \
 c_{L2}=1\,
 \een
of the normalization parameters generates the elementary diagonal
metric (\ref{trometroid}),
 \be
  S_\eta^{(3)}=
  \left[ \begin {array}{ccc}
  3&0&0
  \\\noalign{\medskip}0&6&0
  \\\noalign{\medskip}0&0&6
  \end {array} \right]\,.
   \label{urometroid}
 \ee
The main merit of such a diagonal special case (non-numerically
accessible, in our present model, at any $N$ \cite{chebypol}) may be
seen in the facilitated feasibility of the evaluation of the
manifestly hermitian isospectral Hamiltonian
 \be
 h=h^{(3)}=S_\eta^{1/2}M S_\eta^{-1/2}=
  \left[ \begin {array}{ccc}
  \sqrt{3}&\sqrt{2}&0
  \\\noalign{\medskip}\sqrt{2}&\sqrt{3}&1
  \\\noalign{\medskip}0&1&\sqrt{3}
  \end {array} \right]\,
   \label{urometrd}
 \ee
possessing the following illustrative set of orthonormal
eigenvectors,
 \ben
  e_{0}^T=c_{R0}\, [1,-\sqrt {3}/2,1/2] \,,\ \ \
  e_{1}^T=c_{R1}\, [-1,0,1]\,,\ \ \
  e_{2}^T=c_{R2}\, [2,\sqrt {3},1]\,.\ \ \
 \een
Obviously, all of the alternative normalizations and analogous
insertions in the above-listed formulae remain routine. They will
lead to non-numerical, fully exact formulae. The only remaining
challenge may be seen in the extension of the spectral-like
representation of the operators $a$ and $b$ beyond the ``trivial"
cases, i.e., to $N\geq 4$.

\subsection{The choice of $N=4$}

The $N=4$ version of our main definition ~(\ref{exprN}) requires,
again, the evaluation of the trigonometric-function eigenvalues
$\epsilon_j$ of the matrix
 $$
 {M}=
  \left[ \begin {array}{cccc} \sqrt {2+\sqrt {2}}&2&0&0
\\\noalign{\medskip}1&\sqrt {2+\sqrt {2}}&1&0\\\noalign{\medskip}0&1&
\sqrt {2+\sqrt {2}}&1\\\noalign{\medskip}0&0&1&\sqrt {2+\sqrt {2}}
\end {array} \right]\,.
 $$
It is easy to verify that in terms of the auxiliary constants
$\alpha_0=0$, $\alpha_1=2-\sqrt {2}$, $\alpha_2=\sqrt {2}$ and
$\alpha_3=2$ we can write $\epsilon_j =\alpha_j\sqrt {2+\sqrt {2}}$.
What is less routine is the evaluation of the respective
eigenvectors
 \ben
  \Phi_{0}^T=c_{R0}\, [-\sqrt {2}\sqrt {2+\sqrt {2}},
 \sqrt {2}+1,-\sqrt {2+\sqrt {2}},1]  \,\ \ \
 \een
 \ben
  \Phi_{1}^T=c_{R1}\, [1,-1/2\,\sqrt {2}\sqrt {2+\sqrt {2}}+1/2\,
 \sqrt {2+\sqrt {2}},-1/2\,\sqrt {2},1/2\,\sqrt {2+\sqrt {2}}]\,\ \ \
 \een
 \ben
  \Phi_{2}^T=c_{R2}\, [1,1/2\,\sqrt {2}\sqrt {2+\sqrt {2}}-1/2\,
 \sqrt {2+\sqrt {2}},-1/2\,\sqrt {2},-1/2\,\sqrt {2+\sqrt {2}}]\,\ \ \
 \een
 \ben
  \Phi_{3}^T=c_{R3}\, [\sqrt {2}\sqrt {2+\sqrt {2}},
 \sqrt {2}+1,\sqrt {2+\sqrt {2}},1]\,\ \ \
 \een
 \ben
  \eta_{0}^T=c_{L0}\, [-1/2\,\sqrt {2}\sqrt {2+\sqrt {2}},
 \sqrt {2}+1,-\sqrt {2+\sqrt {2}},1]\,\ \ \
 \een
 \ben
  \eta_{1}^T=c_{L1}\, [1,-\sqrt {2}\sqrt {2+\sqrt {2}}+\sqrt {2+\sqrt {2}},
 -\sqrt {2},\sqrt {2+\sqrt {2}}]\,\ \ \
 \een
 \ben
  \eta_{2}^T=c_{L2}\, [1,\sqrt {2}\sqrt {2+\sqrt {2}}-\sqrt {2+\sqrt {2}},
 -\sqrt {2},-\sqrt {2+\sqrt {2}}] \,\ \ \
 \een
 \ben
  \eta_{3}^T=c_{L3}\, [1/2\,\sqrt {2}\sqrt {2+\sqrt {2}},
 \sqrt {2}+1,\sqrt {2+\sqrt {2}},1]\,.
 \een
Although these formulae remain still extremely elementary, their
generation has been based on the computer-assisted symbolic
manipulations.

\subsection{The last, $N=5$ illustration}

Our next (and also last) Hamiltonian matrix
 $$
 {M}= \left[ \begin {array}{ccccc} \frac{1}{2}\,\sqrt {10+2\,
 \sqrt {5}}&2&0&0&0
 \\\noalign{\medskip}1&\frac{1}{2}\,\sqrt {10+2\,\sqrt {5}}&1&0&0
 \\\noalign{\medskip}0&1&\frac{1}{2}\,\sqrt {10+2\,\sqrt {5}}&1&0
 \\\noalign{\medskip}0&0&1&\frac{1}{2}\,\sqrt {10+2\,\sqrt {5}}&1
 \\\noalign{\medskip}0&0&0&1&\frac{1}{2}\,\sqrt {10+2\,\sqrt {5}}
 \end {array}
 \right]
 $$
forces us to conclude that in spite of the purely non-numerical
character of the recipe (based just on insertions), the $N \geq 5$
explicit formulae  become rather lengthy. The typographical
considerations start to represent, in fact, the main limiting factor
of the presentation of the $N\geq 5$ continuation of the series. For
example, in spite of the existence of closed non-trigonometric
formulae at $N=5$, the pentaplet of energies 0., 0.726542529,
1.902113032, 3.077683536, 3.804226065 is already better represented
numerically. The same comment applies also to the closed-form
eigenvectors, with
 \ben
 \Phi_{0}^T=c_{R0}\, [1,-\frac{1}{4}\,\sqrt {10+2\,\sqrt {5}},\frac{1}{4}+\frac{1}{4}\,
 \sqrt {5},\frac{1}{8}\,\sqrt {10+2\,\sqrt {5}}-\frac{1}{8}\,\sqrt {10+2\,
 \sqrt {5}}\sqrt {5},-\frac{1}{4}+\frac{1}{4}
\,\sqrt {5}]   \,\ \ \
 \een
etc, and with
 \ben
 \eta_{0}^T=c_{L0}\,
 [\frac{1}{2}+\frac{1}{2}\,\sqrt {5},-\frac{1}{4}\,\sqrt {10+2\,\sqrt {5}}-\frac{1}{4}\,\sqrt
{10+2\,\sqrt {5}}\sqrt {5},\frac{3}{2}+\frac{1}{2}\,\sqrt {5},-\frac{1}{2}\,\sqrt {10+2\,
\sqrt {5}},1]
 \een
etc.

\section{Conclusions}

We have shown that two apparently different concepts previously
introduced in the context of quantum mechanics with a non
self-adjoint hamiltonian are strongly related, the one producing the
other under very natural assumptions. We have also analyzed a few
examples to show how the construction works. The analysis of
non-regular NLPB, where unbounded metric operators play a crucial
role, will be considered in the nearest future.

\section*{Acknowledgements}

F.B. acknowledges M.I.U.R. for financial support. M.Z. acknowledges
the support by the GA\v{C}R grant Nr. P203/11/1433, by the M\v{S}MT
``Doppler Institute" project Nr. LC06002 and by the Institutional
Research Plan AV0Z10480505.

\end{document}